\let\originalleft\left
\let\originalright\right
\renewcommand{\left}{\mathopen{}\mathclose\bgroup\originalleft}
\renewcommand{\right}{\aftergroup\egroup\originalright}
\newcommand{\ket}[1]{\left| #1 \right\rangle}
\newcommand{\sign}{\operatorname{sign}}
\newcommand{\NULL}{\operatorname{NULL}}
\newcommand{\FindFrom}[1]{\textsc{FindFrom}_{#1}}
\newcommand{\FindFixedLen}[1]{\textsc{FindFixedLen}_{#1}}
\newcommand{\FindAny}[1]{\textsc{FindAny}_{#1}}
\newcommand{\FindFirst}[1]{\textsc{FindFirst}_{#1}}
\DeclareMathOperator{\dyck}{\textsc{Dyck}}
\theoremstyle{plain}
\title{Quantum Query Complexity of Dyck Languages with Bounded Height}
\author{Kamil Khadiev}{Kazan Federal University, Kazan, Russia}{kamilhadi@gmail.com}{}{}
\author{Yixin Shen}{Université de Paris, IRIF, CNRS, F-75013 Paris, France}{yixin.shen@irif.fr}{}{}
\authorrunning{K. Khadiev and Y. Shen}
\keywords{Dyck Languages with Bounded Height}
\begin{document}

\maketitle

\begin{abstract}
We consider the problem of determining if a sequence of parentheses
is well parenthesized, with a depth of at most h. We denote this language as $Dyck_h$. We study the quantum query complexity of this problem for different h as function of the length n of the word.
It has been known from a recent paper by Aaronson et al. that, for any constant h, since $Dyck_h$ is star-free, it has quantum query complexity $\Tilde{\Theta}(\sqrt{n})$, where the hidden logarithm factors in $\Tilde{\Theta}$ depend on h. Their proof does not give rise to an algorithm.
When h is not a constant, $Dyck_h$ is not even context-free. We give an algorithm with $O\left(\sqrt{n}\log(n)^{0.5h}\right)$ quantum queries for $Dyck_h$ for all h. This is better than the trival upper bound $n$ when $h=o(\frac{\log(n)}{\log\log n})$. We also obtain lower bounds: we show that for every $0<\epsilon\leq 0.37$, there exists $c>0$ such that  $Q(\text{Dyck}_{c\log(n)}(n))=\Omega(n^{1-\epsilon})$.
When $h=\omega(\log(n))$, the quantum query complexity is close to $n$, i.e. $Q(\text{Dyck}_h(n))=\omega(n^{1-\epsilon})$ for all $\epsilon>0$. 
Furthermore when $h=\Omega(n^\epsilon)$ for some $\epsilon>0$, $Q(\text{Dyck}_{h}(n))=\Theta(n)$.
\end{abstract}

\section{Introduction}
Formal languages have a long history of study in classical theoretical computer science, starting with the study of regular languages back to Kleene in the 1950s \cite{Kleene56}. Roughly speaking, a formal language consists of an alphabet of letters, and a set of rules for generating words from those letters. Chomsky’s hierarchy is an early attempt to answer the following question: “Given more complex rules, what kinds of languages can we generate?”. The most well-known types of languages in that hierarchy are the regular and context-free languages. 
Modern computational complexity theory is still defined in terms of languages: complexity classes are defined as the sets of the formal languages that can be parsed by machines with certain computational powers.

The relationship between the Chomsky hierarchy and other models of computations has been studied extensively in many models, including Turing machines, probabilistic machines \cite{RABIN1963230}, quantum finite automata \cite{ambainis2015automata}, streaming algorithms \cite{Magniez,BABU201313} and query complexity \cite{Alon}. Query complexity is also known as the ‘black box model’, in this setting we only count the number of times that we need to query (i.e. access) the input in order to carry out our computation. It has been observed that quantum models of computation allow for significant improvements in the query complexity, when the quantum oracle access to the input bits is available \cite{Deutsch}. We assume the reader is familiar with the basis of quantum computing. One may refer to \cite{Nielsen} for a more detailed introduction to this topic.

The recent work by Scott Aaronson, Daniel Grier, and Luke Shaeffer \cite{Aaronson2018AQQ} is the first to study the relationship between the regular languages and quantum query complexity. They gives a full characterization of regular languages in the quantum query complexity model. More precisely, they show that every regular language naturally falls into one of three categories:

\begin{itemize}
\item ‘Trivial’ languages, for which membership can be decided by the first and last characters of the input string. For instance, the language describing all binary representations of even numbers is trivial.
\item Star-free languages, a variant of regular languages where complement is allowed ($\overline A$ — i.e. `something not in A’), but the Kleene star is not. The quantum query complexity of these languages is $\tilde{\Theta}(\sqrt{n})$.
\item All the rest, which have quantum query complexity $\Theta(n)$.
\end{itemize}

The proof uses the algebraic definitions of regular languages (i.e. in terms of monoids). Starting from an aperiodic monoid, Schützenberger constructs a star-free language recursively based on the “rank” of the monoid elements involved. \cite{Aaronson2018AQQ} uses this decomposition of star-free language of higher rank into star-free languages of smaller rank to show by induction that any star-free languages has $\tilde{\Theta}(\sqrt{n})$ quantum query complexity. However their proof does not immediately give rise to an algorithm.

One of the star-free language mentioned in \cite{Aaronson2018AQQ} is the Dyck language (with one type of parenthesis) with a constant bounded height.
The Dyck language is the set of balanced strings of brackets "(" and ")". When at any point the number of opening parentheses exceeds the number of closing parentheses by at most $h$, we denote the language as $Dyck_h$. 

The Dyck language is a fundamental example of a context-free language that is not regular. When more types of parenthesis are allowed, the famous Chomsky–Schützenberger representation theorem shows that any context-free language is the homomorphic image of the intersection of Dyck language and a regular language.

\paragraph*{Contributions}
We give an explicit algorithm (see Theorem~\ref{th:upper_bound}) for the decision problem of $Dyck_h$ with $O\left(\sqrt{n}\log(n)^{0.5h}\right)$ quantum queries. The algorithm also works when $h$ is not a constant and is better than the trival upper bound $n$ when $h=o(\frac{\log(n)}{\log\log n})$. We note that when $h$ is not a constant,
that is, if the height is allowed to depend on the length of the word,
$Dyck_h$ is not context-free anymore, therefore previous results do not apply.
We also obtain lower bounds on the quantum query complexity.
We show (Theorem~\ref{th:lower_bound_small_eps}) that for every $0<\epsilon\leq 0.37$, there exists $c>0$ such that  $Q(\text{Dyck}_{c\log(n)}(n))=\Omega(n^{1-\epsilon})$.
When $h=\omega(\log(n))$, the quantum query complexity is close to $n$, i.e. $Q(\text{Dyck}_h(n))=\omega(n^{1-\epsilon})$ for all $\epsilon>0$, see Theorem~\ref{th:lower_bound_omega_log}.
Furthermore when $h=\Omega(n^\epsilon)$ for some $\epsilon>0$, we show (Theorem~\ref{th:lower_bound_all}) that $Q(\text{Dyck}_{h}(n))=\Theta(n)$. Similar lower bounds were recently independently proven by Ambainis, Balodis, Iraids, Pr\=usis, and Smotrovs \cite{riga}, and Buhrman, Patro and Speelman \cite{buhrman2019quantum}.

\paragraph*{Structure of the paper}
In the next section, we give some definitions. In the following section we provide an algorithm of quantum query complexity $O(\sqrt{n}\log(n)^{0.5h})$ for $Dyck_h$. In the last section, we show some lower bounds when $h$ is $\Omega(\log(n))$.

\section{Definitions}
\label{s:defs}

For a word $x\in\Sigma^*$ and a symbol $a\in\Sigma$, let $|x|_a$ be the number of occurrences of $a$ in $x$.

For two (possibly partial) Boolean functions $g: G \rightarrow \{0,1\}$, where $G \subseteq \{0,1\}^n$, and $h:H\rightarrow \{0,1\}$, where $H\subseteq \{0,1\}^m$, we define the composed function $g\circ h: D\rightarrow \{0,1\}$, with $D\subseteq \{0,1\}^{nm}$, as
\[
\left(g\circ h\right) (x) = g\left(h(x_1,\dots,x_m), \dots, h(x_{(n-1)m+1},\dots, x_{nm})\right).
\]
Given a Boolean function $f$ and a nonnegative integer $d$, we define $f^d$ recursively as $f$ iterated $d$ times: $f^d=f\circ f^{d-1}$ with $f^1=f$.

{\bf Quantum query model.}
    We use the standard form of the quantum query model. 
    Let $f:D\rightarrow \{0,1\},D\subseteq \{0,1\}^n$ be an $n$ variable function we wish to compute on an input $x\in D$. We have an oracle access to the input $x$ --- it is realized by a specific unitary transformation usually defined as $\ket{i}\ket{z}\ket{w}\rightarrow \ket{i}\ket{z+x_i\pmod{2}}\ket{w}$ where the $\ket{i}$ register indicates the index of the variable we are querying, $\ket{z}$ is the output register, and $\ket{w}$ is some auxiliary work-space. An algorithm in the query model consists of alternating applications of arbitrary unitaries independent of the input and the query unitary, and a measurement in the end. The smallest number of queries for an algorithm that outputs $f(x)$ with probability $\geq \frac{2}{3}$ on all $x$ is called the quantum query complexity of the function $f$ and is denoted by $Q(f)$.
    
    Let a symmetric matrix $\Gamma$ be called an adversary matrix for $f$ if the rows and columns of $\Gamma$ are indexed by inputs $x\in D$ and $\Gamma_{xy}=0$ if $f(x)=f(y)$. Let $\Gamma^{(i)}$ be a similarly sized matrix such that $\Gamma^{(i)}_{xy}=\begin{cases}\Gamma_{xy}&\text{ if }x_i\neq y_i\\ 0&\text{ otherwise}\end{cases}$. Then let 
    \[Adv^{\pm}(f)=\max_{\Gamma\text{ - an adversary matrix for }f}{\frac{\|\Gamma\|}{\max_i{\|\Gamma^{(i)}\|}}}\]
    be called the adversary bound and let
    \[Adv(f)=\max_{\substack{\Gamma\text{ - an adversary matrix for }f\\ \Gamma \text{ - nonnegative}}}{\frac{\|\Gamma\|}{\max_i{\|\Gamma^{(i)}\|}}}\]
    be called the positive adversary bound.
        The following facts will be relevant for us:
    \begin{itemize}
        \item $Adv(f)\leq Adv^\pm(f)$;
        \item $Q(f)=\Theta(Adv^{\pm}(f))$ \cite{Reichardt11};
        \item $Adv^{\pm}$ composes exactly even for partial Boolean functions $f$ and $g$, meaning, $Adv^\pm(f\circ g)=Adv^\pm(f)\cdot Adv^\pm(g)$ \cite[Lemma~6]{kimmel2012quantum}
    \end{itemize}
    
{\bf Reductions.}
    We will say that a Boolean function $f$ is reducible to $g$ and denote it by $f \leqslant g$ if there exists an algorithm that given an oracle $O_x$ for an input of $f$ transforms it into an oracle $O_y$ for $g$ using at most $O(1)$ calls of oracle $O_x$ such that $f(x)$ can be computed from $g(y)$. Therefore, from $f \leqslant g$ we conclude that $Q(f)\leq Q(g)$ because one can compute $f(x)$ using the algorithm for $g(y)$ and the reduction algorithm that maps $x$ to $y$.

{\bf Dyck languages of bounded depth.}
Let $\Sigma$ be an alphabet consisting of two symbols: \texttt{(} and \texttt{)}.
The Dyck language $L$ consists of all $x\in \Sigma^*$ that represent a correct sequence of opening and closing parentheses.
We consider languages $L_k$ consisting of all words $x\in L$ where the number of opening parentheses that are not closed yet never exceeds $k$.

The language $L_k$ corresponds to a query problem $\dyck_{n, k}(x_1, ..., x_n)$ where $x_1, \ldots, x_n \in \{0, 1\}$ describe a word of length $n$
in the natural way: the $i^{\rm th}$ symbol of $x$ is \texttt{(} if $x_i=0$ and \texttt{)} if $x_i=1$. $\dyck_{n, k}(x)=1$ iff the word $x$ belongs to $L_k$. In the following, we sometimes use $\dyck_k(n)$ as a synonym of $\dyck_{n,k}()$. 

 For all $x \in \{0,1\}^n$, we define  $f(x)=|x|_{\texttt{0}}-|x|_{\texttt{1}}$, where $|x|_{a}$ is a number of $a$ symbols in $x$. We call $f$ the \textbf{balance}.
For all $0\leq i\leq n-1$, we define $x[i,j]=x_i,x_{i+1}\cdots x_j$. Finally, we define $h(x)= \max_{0\leq i\leq n-1}f(x[0,i])$ and $h^-(x)= \min_{0\leq i\leq n-1}f(x[0,i])$. We also define the function $\sign$ such that $\sign(a)=1$ if $a>0$, and $\sign(a)=-1$ if $a<0$, $\sign(a)=0$ if $a=0$. 

A substring $x[i,j]$ is called a $t$-substring if $f(x[i,j])=t$ for some integer $t$. A substring $x[i,j]$  is \emph{minimal} if it does not contain a substring $x[i',j']$ such that $(i,j)\neq (i',j')$, $f(x[i',j'])=f(x[i,j])$ and $i\geq i'\geq j'\geq j$.

\section{A quantum algorithm for membership testing of \texorpdfstring{$\dyck_{n,k}$}{DYCK\_\{n,k\}}}
In this section, we give a quantum algorithm for $\dyck_{n,k}(x)$, where $k$ can be a function of $n$. The general idea is that $\dyck_{n,k}(x)=0$ if and only if there are no $\pm(k+1)$-substrings in $1^nx0^n$. We thus first give an algorithm that searches for any $\pm k$-substrings and then give an algorithm for $\dyck_{n,k}$.

\subsection{\texorpdfstring{$\pm k$}{±k}-Substring Search algorithm}\label{sec:substr}


%
The goal of this section is to describe a quantum algorithm which searches for a substring $x[i,j]$ that has a balance $f(x[i,j])\in\{+k,-k\}$ for some integer $k$. 

Throughout this section, we find and consider only minimal substrings. For any two [minimal] $\pm k$-substrings $x[i,j]$ and $x[k,l]$: $i<k \implies j<l$. This induces a natural linear order among all $\pm k$-substrings according to their starting (or, equivalently, ending) positions. Furthermore, substrings of opposite signs do not intersect at all.

This algorithm is the basis of our algorithms for $\dyck_{n,k}$. The easiest case when $k=2$ is shown in Appendix~\ref{sec:substr2}. The main idea for $k=2$ is to use Grover's search algorithm to search for two sequential equal symbols.

The algorithm works in a recursive way. It searches for two $\pm (k-1)$-substrings $x[l_1,r_1]$ and $x[l_2,r_2]$  such that there are no $\pm (k-1)$-substrings between them. If both substrings $x[l_1,r_1]$ and $x[l_2,r_2]$ are $+(k-1)$-substrings, then we get a $+k$-substring in total. If both substrings are $-(k-1)$-substrings, then we get a $-k$-substring in total.

We first discuss two building blocks for our algorithm. The first one is $\FindFrom{k}(l,r,t,d,s) $ and accepts as inputs:
\begin{itemize}
    \item the borders $l$ and $r$, where $l$ and $r$ are integers such that $0\leq l \leq r\leq n-1$;
    \item a position $t\in \{l,\dots, r\}$;
    \item a maximal length $d$ for the substring, where $d$ is an integer such that $0<d\leq r-l+1$;
    \item the sign of the balance $s\subseteq \{+1,-1\}$. $+1$ is used for searching for a $+k$-substring, $-1$ is used for searching for a $-k$-substring, $\{+1,-1\}$ is used for searching for both.  
\end{itemize}

It outputs a triple $(i,j,\sigma)$ such that $t\in[i,j]$, $j-i+1\leq d$, $f(x[i,j])\in\{+k,-k\}$ and $\sigma=\sign(f(x[i,j]))\in s$. If no such substrings have been found, the algorithm returns $\NULL$.

The second one is $\FindFirst{k}(l,r,s,direction)$ and accepts as inputs:
\begin{itemize}
    \item the borders $l$ and $r$, where $l$ and $r$ are integers such that $0\leq l \leq r\leq n-1$;
    \item the sign of the balance $s\subseteq \{+1,-1\}$. $+1$ is used for searching for a $+k$-substring, $-1$ is used for searching for a $-k$-substring, $\{+1,-1\}$ is used for both;
    \item a $direction \in \{left,right\}$.
\end{itemize} 
It outputs a triple $(i,j,\sigma)$ such that $l\leq i\leq j\leq r$, $f(x[i,j])\in\{+k,-k\}$ and $\sigma=\sign(f(x[i,j]))\in s$. Furtherfore, if the direction is "right", then $x[i,j]$ is the first substring starting from the index $l$ to the right that satisfies all previous conditions. If the direction is "left", then $x[i,j]$ is the first substring starting from the index $r$ to the left that satisfies all previous conditions. The algorithm returns $\NULL$, if it cannot find such a substring.

These two building blocks are interdependent since $\FindFrom{k}$ uses $\FindFirst{k-1}$ as a subroutine and $\FindFirst{k}$ uses $\FindFrom{k}$ as a subroutine. 
A description of implementation of $\FindFrom{k}(l,r,t,d,s)$  follows. The algorithm is presented in Appendix \ref{sec:desc_k_substring_base}.
\begin{description}
    \item[Step $1$.] We check whether $t$ is inside a $\pm (k-1)$-substring of length at most $d-1$, i.e. 
        
        $v=(i,j,\sigma)\gets  \FindFrom{k-1}(l,r,t,d-1,\{+1,-1\}).$
        
        If $v\neq \NULL$, then $(i_1,j_1,\sigma_1)\gets(i,j,\sigma)$ and the algorithm goes to Step $2$. Otherwise, the algorithm goes to Step $6$.
    \item[Step $2$.] We check whether $i_1-1$ is inside a $\pm (k-1)$-substring of length at most $d-1$, i.e. 
    
    $v=(i,j,\sigma)\gets\FindFrom{k-1}(l,r,i_1-1,d-1,\{+1,-1\}).$
    
        If $v=\NULL$, then the algorithm goes to Step $3$.
        If $v\neq \NULL$ and $\sigma=\sigma_1$, then $(i_2,j_2,\sigma_2)\gets(i,j,\sigma)$ and the algorithm goes to Step $8$. Otherwise, the algorithm goes to Step $4$.
    \item[Step $3$.] We search for the first $\pm (k-1)$-substring on the left from $i_1-1$ at distance at most $d$, i.e. 
    
    $v=(i,j,\sigma)\gets\FindFirst{k-1}(\min(l,j_1-d+1),i_1-1),\{+1,-1\},left).$
    
    If $v\neq \NULL$ and $\sigma_1=\sigma$, then $(i_2,j_2,\sigma_2)\gets(i,j,\sigma)$ and the algorithm goes to Step $8$. Otherwise, the algorithm goes to Step $4$.    
        
    \item[Step $4$.] We check whether $j_1+1$ is inside a $\pm (k-1)$-substring of length at most $d-1$, i.e. 
    
    $v=(i,j,\sigma)\gets\FindFrom{k-1}(l,r,j_1+1,d-1,\{+1,-1\}).$
    
        If $v\neq \NULL$, then $(i_2,j_2,\sigma_2)\gets(i,j,\sigma)$ and the algorithm goes to Step $8$. Otherwise, the algorithm goes to Step $5$.
        
    \item[Step $5$.] We search for the first $\pm (k-1)$-substring on the right from $j_1+1$ at distance at most $d$, i.e. 
    
    $v=(i,j,\sigma)\gets\FindFirst{k-1}(j_1+1,\min(i_1+d-1,r),\{+1,-1\},right).$ 
    
    If $v\neq \NULL$, then $(i_2,j_2,\sigma_2)\gets(i,j,\sigma)$. The algorithm goes to Step $8$. Otherwise, the algorithm fails and returns $\NULL$.

    \item[Step $6$.] We search for the first $\pm (k-1)$-substring on the right at distance at most $d$ from $t$, i.e.
    
   $v=(i,j,\sigma)\gets\FindFirst{k-1}(t,\min(t+d-1,r),\{+1,-1\},right)$
   
    If $v\neq \NULL$, then $(i_1,j_1,\sigma_1)\gets(i,j,\sigma)$ and the algorithm goes to Step $7$. Otherwise, the algorithm fails and returns $\NULL$.
    \item[Step $7$.] We search for the first $\pm (k-1)$-substring on the left from $t$ at distance at most $d$, i.e. 
    
    $v=(i,j,\sigma)\gets\FindFirst{k-1}(\max(l,t-d+1),t),\{+1,-1\},left)$
    
    If  $v\neq \NULL$, then $(i_2,j_2,\sigma_2)\gets(i,j,\sigma)$ and go to Step $8$. Otherwise, the algorithm fails and returns $\NULL$.
    \item[Step $8$.] If $\sigma_1=\sigma_2$, $\sigma_1\in s$ and $\max(j_1,j_2)-\min(i_1,i_2)+1\leq d$ , then we output $[\min(i_1,i_2),\max(j_1,j_2)]$, otherwise the algorithm fails and returns $\NULL$. 
\end{description}

\begin{algorithm}[h]
    \caption{$\FindFrom{k}(l,r,t,d,s)$. Search any $\pm k$-substring.\label{alg:substringk_base}}
    \begin{algorithmic}
    \State $v=(i_1,j_1,\sigma_1)\leftarrow \FindFrom{k-1}(l,r,t,d-1,\{+1,-1\})$
    \If{$v \neq \NULL$}\Comment{if $t$ is inside a $\pm(k-1)$-substring}
        \State $v'=(i_2,j_2,\sigma_2)\leftarrow \FindFrom{k-1}(l,r,i_1-1,d-1,\{+1,-1\})$
        \If{$v'=\NULL$}
            \State $v'=(i_2,j_2,\sigma_2)\gets\FindFirst{k-1}(\min(l,j_1-d+1),i_1-1),\{+1,-1\},left)$
        \EndIf
        \If {$v'\neq\NULL$ and $\sigma_2\neq\sigma_1$}
        \State $v'\gets\NULL$
        \EndIf
        \If{$v'=\NULL$}
            \State $v'=(i_2,j_2,\sigma_2)\leftarrow \FindFrom{k-1}(l,r,j_1+1,d-1,\{+1,-1\})$
        
        \If{$v'=\NULL$}
            \State $v'=(i_2,j_2,\sigma_2)\leftarrow \FindFirst{k-1}(j_1+1,\min(i_1+d-1,r),\{+1,-1\},right)$
        \EndIf
        \EndIf
        \If{$v'=\NULL$}
            \State \Return{$\NULL$}
        \EndIf
    \Else
        \State $v=(i_1,j_1,\sigma_1)\gets\FindFirst{k-1}(t,\min(t+d-1,r),\{+1,-1\},right)$
        \If{$v=\NULL$}
            \State \Return{$\NULL$}
        \EndIf
        \State $v'=(i_2,j_2,\sigma_2)\gets\FindFirst{k-1}(\max(l,t-d+1),t),\{+1,-1\},left)$
        \If{$v'=\NULL$}
            \State \Return{$\NULL$}
        \EndIf
    \EndIf
    \If{$\sigma_1=\sigma_2$ and $\sigma\in s$ and $\max(j_1,j_2)-\min(i_1,i_2)+1\leq d$}
        \State\Return{$(\min(i_1,i_2),\max(j_1,j_2),\sigma_1)$}
    \Else
        \State \Return{$\NULL$}
    \EndIf
    \end{algorithmic}
\end{algorithm}

Using this basic procedure, we then search for a $\pm k-$substring by searching for a $t$ and $d$ such that
$\FindFrom{k}(l,r,t,d,s)$
returns a non-$\NULL$ value. Unfortunately,
our algorithms have two-sided bounded error: they
can, with small probability, return $\NULL$ even if a substring exists or return a wrong substring instead
of $\NULL$. In this setting, Grover's search algorithm
is not directly applicable and we need to use a more
sophisticated search~\cite{Hoyer03}. Furthermore, simply applying
the search algorithm naively does not give the right complexity.
Indeed, if we search for a substring of length roughly $d$
(say between $d$ and $2d$), we can find one with expected running time $O\left(\sqrt{\tfrac{r-l}{d}}\right)$
because at least $d$ values of $t$ will work.
On the other hand, if there are no such substrings, the expected running time will be 
$O(\sqrt{r-l})$. Intuitively, we can do better
because if there is a substring of length at least $d$ then there are at least $d$ values of $t$ that work. Hence, we only need to distinguish between no solutions, or at least $d$. This allows to stop the Grover iteration early and make
$O\left(\sqrt{\tfrac{r-l}{d}}\right)$ queries in all cases.

\begin{lemma}[Modified from \cite{Hoyer03}]\label{lem:modified_hoyer03}
    Given $n$ algorithms, quantum or classical, each computing some bit-value with bounded error probability, and some $T\geqslant1$,
    there is a quantum algorithm that uses $O(\sqrt{n/T})$
    queries and with constant probability:
    \begin{itemize}
        \item returns the index of a ``1'', if they are at least $T$
            ``1s'' among the $n$ values,
        \item returns $\NULL$ if they are no ``1'',
        \item returns anything otherwise.
    \end{itemize}
\end{lemma}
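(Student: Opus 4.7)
The plan is to reduce to the original bounded-error quantum search of \cite{Hoyer03} via uniform random subsampling. The intuition is: if at least $T$ of the $n$ values are $1$, then a uniformly random subset of size $\Theta(n/T)$ contains a $1$ with constant probability, after which the standard bounded-error search finds it in $O(\sqrt{n/T})$ queries, while if there are no $1$'s the subset also has none so the inner search simply returns $\NULL$.

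Concretely, I would proceed in three steps. First, sample a uniformly random subset $S\subseteq[n]$ of size $s=\lceil c\,n/T\rceil$ for a sufficiently large constant $c$. Second, invoke the bounded-error quantum search of \cite{Hoyer03} on the sub-oracles restricted to $S$, which costs $O(\sqrt{s})=O(\sqrt{n/T})$ queries. Third, return its output verbatim. For the correctness analysis, if there are no $1$'s in the full input then $S$ also has no $1$'s and the inner search returns $\NULL$ with its guaranteed constant probability. If there are at least $T$ $1$'s, then the probability that $S$ contains no $1$ is at most $\binom{n-T}{s}/\binom{n}{s}\leq (1-T/n)^{s}\leq e^{-c}$, which can be made an arbitrarily small constant; conditional on $S$ containing at least one $1$, the inner search returns a valid index with its own constant success probability. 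When the number of $1$'s lies strictly between $1$ and $T$ the lemma allows any output, so nothing needs to be checked.

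The main obstacle will be checking that the two constant-probability events compose into a single constant success probability without inflating the query budget. This is essentially routine: choose $c$ large enough that $(1-e^{-c})$ times the Hoyer03 success probability exceeds $2/3$, or alternatively run $O(1)$ independent repetitions and accept any non-$\NULL$ output, returning $\NULL$ only if every repetition does. Either way the total query count remains $O(\sqrt{n/T})$. An alternative route, closer to the original spirit of \cite{Hoyer03}, would be to early-stop the internal BBHT-style amplitude amplification after $O(\sqrt{n/T})$ iterations rather than the usual $O(\sqrt{n})$; this avoids the sampling step entirely but demands a more delicate analysis of the two-sided error across the shortened schedule, so the subsampling reduction above is the cleaner proof.
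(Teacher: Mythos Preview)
Your subsampling reduction is correct and complete; the bound $\binom{n-T}{s}/\binom{n}{s}\leq(1-T/n)^s\leq e^{-c}$ is valid, the composition of the two constant success probabilities is handled, and the query cost is indeed $O(\sqrt{n/T})$.

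It is, however, a genuinely different route from the paper. The paper does exactly what you mention as your ``alternative route'': it opens up the geometric schedule of \cite{Hoyer03}, observes that stage $A_m$ succeeds when the number of solutions lies in $[n/9^{m+1},n/9^m]$, and simply truncates the outer loop at $m\leq\log_9(n/T)$ since any stage beyond that is only relevant when fewer than $T$ solutions exist, a case in which the lemma allows arbitrary output. The cost is then $\sum_{m\leq\log_9(n/T)}O(3^m)=O(\sqrt{n/T})$. What your approach buys is a fully black-box use of \cite{Hoyer03}: you never touch its internals, at the price of an extra layer of random subsampling. What the paper's approach buys is that no auxiliary sampling is needed and the argument is a two-line modification of the original analysis, at the price of quoting the internal loop structure rather than just the final theorem. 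Either proof is perfectly acceptable here.
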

\begin{proof}
    The main loop of the algorithm of \cite{Hoyer03} is the following,
    assuming the algorithms have error at most $1/9$:
    \begin{itemize}
        \item for $m=0$ to $\lceil\log_9 n\rceil$-1 do:
            \begin{enumerate}
                \item run $A_m$ 1000 times,
                \item verify the 1000 measurements, each by $O(\log n)$ runs of the corresponding algorithm,
                \item if a solution has been found, then output a solution and stop
            \end{enumerate}
        \item Output ‘no solutions’
    \end{itemize}
    The key of the analysis is that if the (unknown) number $t$
    of solutions lies in the interval $[n/9^{m+1},n/9^m]$, then
    $A_m$ succeeds with constant probability. In all cases, if they are
    no solutions, $A_m$ will never succeeds with high probability (ie the algorithm only applies good solutions).
    
    In our case, we allow the algorithm to return anything (including
    $\NULL$) if $t<T$. This means that we only care about the values of
    $m$ such that $n/9^{m}\geqslant T$, that is
    $m\leqslant\log_9\tfrac{n}{T}$. Hence, we simply run the algorithm
    with this new upper bound for $d$ and it will satisfy our
    requirements with constant probability. The complexity is
    
    $
        \sum\limits_{m=0}^{\left\lfloor log_9\tfrac{n}{T}\right\rfloor}1000\cdot O(3^m)+1000\cdot O(\log n)=O(3^{\log_9\tfrac{n}{T}})=O(\sqrt{n/T}).
    $

\end{proof}

The algorithm that uses above ideas is presented in Algorithm \ref{alg:fixedlenk}.

\begin{algorithm}[ht]
    \caption{$\FindFixedLen{k}(l,r,d,s)$. Search for any $\pm k$-substring of length $\in[d/2,d]$\label{alg:fixedlenk}}
    \begin{algorithmic}
        \State Find $t$ such that $v_t\gets\FindFrom{k}(l,r,t,d,s)\neq\NULL$ using Lemma~\ref{lem:modified_hoyer03} with $T=d/2$.
        \State \Return{$v_t$ or $\NULL$ if none}.
    \end{algorithmic}
\end{algorithm}


We can then write an algorithm $\FindAny{k}(l,r,s)$ that searches for any $\pm k$-substring. We can consider a randomized algorithm that uniformly chooses on of power $2$ from $[2^{\lceil\log_2 k\rceil},  (r-l)]$ segment, i.e.  $d\in\{2^{\lceil\log_2 k\rceil}, 2^{\lceil\log_2 k\rceil+1},\dots, 2^{\lceil\log_2 (r-l)\rceil}\}$. For the chosen $d$, we run Algorithm \ref{alg:fixedlenk}. So, the algorithm will successful with probability at least $O(1/\log (r-l))$. We can apply Amplitude amplification and ideas from Lemma \ref{lem:modified_hoyer03} to this algorithm and get the algorithm that uses $O(\sqrt{\log n})$ iterations. 

\begin{algorithm}[ht]
    \caption{$\FindAny{k}(l,r,s)$. Search for any $\pm k$-substring.\label{alg:substringk_any}}
    \begin{algorithmic}
        \State Find $d\in\{2^{\lceil\log_2 k\rceil}, 2^{\lceil\log_2 k\rceil+1},\dots, 2^{\lceil\log_2 (r-l)\rceil}\}$ such that $v_d\gets\FindFixedLen{k}(l,r,d,s)\neq\NULL$ using amplitude amplification.
        \State \Return{$v_d$ or $\NULL$ if none}.
    \end{algorithmic}
\end{algorithm}

Finally, we can write an algorithm that finds the first $\pm k$-substring. The idea is similar to the first one search algorithm from \cite{k2014,ll2016}. We search for a $\pm k$-substring in the segment of length $w$ that is a power of $2$. Assume that the answer is $x[i,j]$ and we search it on the left in the segment $[l,r]$, then the first time when we find the substring is the case $v=2^{\lceil\log_2 (l-j) \rceil}\leq 2(l-j)$.
Procedure $\FindFirst{k}$ in Algorithm \ref{alg:substringk_first}
implements this idea.

\begin{algorithm}[ht]
    \caption{$\FindFirst{k}(l,r,s,direction)$. The algorithm for searching for the first $\pm k$-substring.\label{alg:substringk_first}}
    \begin{algorithmic}
        \State $w\gets 2^{\lceil \log_2 k \rceil}$
        \State $v\gets NULL$
        \While{$w\leq 2(r-l)$ and $v=NULL$}
        \If{$direction=left$}
        \State $v = (i,j,\sigma) \gets \FindAny{k}(l,\min(r,l+w-1),s)$
        \EndIf
        \If{$direction=right$}
         \State $v = (i,j,\sigma) \gets \FindAny{k}(\max(l,r-w+1),r,s)$
        \EndIf
        \State $w \gets w\cdot 2$
        \EndWhile
               \If{$v\neq NULL$} 
               \State $v'\gets v$
                \While{$v'\neq NULL$}
                \State $v\gets v'$
               \If{$direction=left$}
                \State $v' = (i,j,\sigma) \gets \FindAny{k}(l,j-1,s)$
                \EndIf 
                \If{$direction=right$}
                \State $v' = (i,j,\sigma) \gets \FindAny{k}(i+1,r,s)$
                \EndIf
                \EndWhile

            \EndIf
            \State\Return{$v$}
    \end{algorithmic}
\end{algorithm}

\begin{proposition}\label{pr:findfrom}
    For any $\varepsilon>0$ and $k$, algorithms $\FindFrom{k}$, $\FindFixedLen{k}$, $\FindAny{k}$ and $\FindFirst{k}$ have two-sided error probability $\varepsilon<0.5$ and return, when correct:
    \begin{itemize}
        \item If $t$ is inside a $\pm k-$substring of sign $s$ of length
            at most $d$ in $x[l,r]$, then
            $\FindFrom{k}$
            will return such a substring, otherwise it returns $\NULL$.
            The expected running time is $O(\sqrt{d}(\log n)^{0.5(k-2)})$.
        \item $\FindFixedLen{k}$ either returns a
            $\pm k-$substring of sign $s$ and length at most $d$ in $x[l,r]$, or $\NULL$. It is only guaranteed to return a substring if there exists $\pm k-$substring of length at least $d/2$, otherwise it can return $\NULL$.
            The expected running time is $O(\sqrt{r-l}(\log (r-l))^{0.5(k-2)})$.
        \item $\FindAny{k}$ returns any
            $\pm k-$substring of sign $s$ in $x[l,r]$, otherwise it returns $\NULL$.
            The expected running time $O(\sqrt{r-l}(\log (r-l))^{0.5(k-1)})$.
        \item $\FindFirst{k}$ returns the first
            $\pm k-$substring in $x[i,j]$ in the specified direction, otherwise it returns $\NULL$.
            The expected running is $O(\sqrt{r-l}(\log(r-l))^{0.5(k-1)})$ if there are no such substrings.
    \end{itemize}
\end{proposition}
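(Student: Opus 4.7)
The plan is to prove all four items simultaneously by mutual induction on $k$, taking the base case $k=2$ from Appendix~\ref{sec:substr2} as given. Within a single level $k$, I would establish the claims in the order $\FindFrom{k}$, $\FindFixedLen{k}$, $\FindAny{k}$, $\FindFirst{k}$, because each later routine at level $k$ calls only earlier routines at level $k$ together with arbitrary routines at level $k-1$, so no circular dependency arises.

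The core step is $\FindFrom{k}$. First I would isolate a structural lemma: every minimal $\pm k$-substring is the concatenation of exactly two same-sign minimal $\pm(k-1)$-substrings, possibly separated by a balanced block. This relies on the already-noted fact that substrings of opposite signs never intersect. With the lemma in hand, correctness of the eight-branch case analysis becomes routine: Steps~1--5 cover the case where the target position $t$ lies inside one of the two constituent $\pm(k-1)$-substrings, and Steps~6--7 cover the case where $t$ lies strictly between them. The complexity follows from counting subroutine calls: there are $O(1)$ calls to $\FindFrom{k-1}$ with parameter $d-1$ and $O(1)$ calls to $\FindFirst{k-1}$ on a window of length at most $d$, so by induction
\[
T_{\mathrm{FF}}(k,d) = O\!\left(T_{\mathrm{FF}}(k-1,d) + T_{\mathrm{FFirst}}(k-1,d)\right) = O\!\left(\sqrt{d}(\log n)^{0.5(k-2)}\right),
\]
the $\FindFirst{k-1}$ term dominating.

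The remaining three bounds then cascade cleanly. For $\FindFixedLen{k}$, whenever a $\pm k$-substring of length $\geq d/2$ exists, at least $d/2$ positions $t$ witness it, so Lemma~\ref{lem:modified_hoyer03} with $T=d/2$ performs $O(\sqrt{(r-l)/d})$ evaluations of $\FindFrom{k}$ and yields the claimed $O(\sqrt{r-l}(\log(r-l))^{0.5(k-2)})$. For $\FindAny{k}$, amplitude amplification over the $O(\log(r-l))$ candidate lengths contributes an extra $\sqrt{\log(r-l)}$ factor, giving $O(\sqrt{r-l}(\log(r-l))^{0.5(k-1)})$. For $\FindFirst{k}$, the doubling search issues calls to $\FindAny{k}$ on geometrically growing windows, and the resulting geometric sum is dominated by its final term, matching the $\FindAny{k}$ bound on the full range.

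The main obstacles I anticipate are twofold. The structural lemma about minimal $\pm k$-substrings looks harmless but its careful verification seems to require an auxiliary induction using the imbalance function $f$ together with the non-intersection property of opposite-sign substrings; bookkeeping the endpoints so that the length constraint $\leq d$ is preserved through the recursion is delicate. The second difficulty is controlling the two-sided error through a recursion tree of depth $k$ with constant branching at each node. A naive union bound gives error $\exp(O(k))$, but this is fixed by amplifying each subroutine through $O(\log(kn))$ independent repetitions with majority voting; this adds only a $\polylog$ factor that is absorbed into the stated $(\log n)^{0.5(k-1)}$ bounds while guaranteeing total error at most $\varepsilon$.
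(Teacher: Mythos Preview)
Your overall strategy---mutual induction on $k$, establishing the four routines in the dependency order $\FindFrom{k}\to\FindFixedLen{k}\to\FindAny{k}\to\FindFirst{k}$, and cascading the complexity bounds---is exactly the paper's. The paper likewise treats $k=2$ separately, does the case analysis for $\FindFrom{k}$ first, derives $\FindFixedLen{k}$ from Lemma~\ref{lem:modified_hoyer03} with $T=d/2$, picks up the extra $\sqrt{\log(r-l)}$ from amplitude amplification in $\FindAny{k}$, and handles $\FindFirst{k}$ by geometric doubling.

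Two points deserve correction. First, your structural lemma is stated too strongly: a minimal $\pm k$-substring $x[i,j]$ is \emph{not} in general the concatenation of two $\pm(k-1)$-substrings separated by a balanced block. What is actually true (and what the paper uses) is that there exist indices $j_1,i_2$ with $x[i,j_1]$ and $x[i_2,j]$ both $\pm(k-1)$-substrings of the same sign; these pieces may overlap ($i_2\le j_1$), and when disjoint the block between them need not be balanced (e.g.\ $x=0010100$ is a minimal $+3$-substring whose constituent $+2$-substrings are $x[0,1]$ and $x[5,6]$, with a middle of balance $-1$). The paper's case analysis splits on overlap versus non-overlap, and Steps~2 and~4 of Algorithm~\ref{alg:substringk_base} exist precisely to handle the overlapping case; your lemma as phrased would not justify them.

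Second, and more seriously, your error-amplification scheme does not yield the stated running time. Repeating each subroutine $\Theta(\log(kn))$ times means that traversing the depth-$k$ recursion multiplies the cost by $(\log(kn))^{\Theta(k)}$; this is \emph{not} a $\polylog$ factor absorbed into $(\log n)^{0.5(k-1)}$---it already exceeds that quantity. The paper's fix is both simpler and tighter: since $\FindFrom{k}$ makes at most five calls to level-$(k-1)$ routines, each already certified with error $\le\varepsilon$, the composite error is at most $1-(1-\varepsilon)^5$, and a \emph{constant} number of repetitions (depending only on $\varepsilon$, not on $k$ or $n$) restores error $\le\varepsilon$. The bounded-error search of Lemma~\ref{lem:modified_hoyer03} then takes care of $\FindFixedLen{k}$ without further amplification. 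This keeps the per-level overhead $O(1)$ and preserves the exponent $0.5(k-1)$.
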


\begin{proof}
We prove the result by induction on $k$. The base case of $k=2$ is
in Appendix~\ref{sec:substr2}.
We first prove the correctness of all the algorithms, assuming there are
no errors. At the end we  explain how to deal with the errors.

\textbf{We start with $\FindFrom{k}$:}
there are different cases to be considered when searching for a $+k$-substring $x[i,j]$.
\begin{enumerate}
    \item Assume that there are $j_1$ and $i_2$ such that $i<j_1<i_2<j$, $|f(x[i,j_1])|=|f(x[i_2,j])|=k-1$ and $sign(f(x[i,j_1]))=sign(f(x[i_2,j]))\in s$.
    \begin{itemize}
        \item If $t\in\{i_2,\dots,j\}$, then the algorithm finds $x[i_2,j]$ on Step $1$ and the first invocation of $\FindFirst{k-1}$ on Step $3$  finds $x[i,j_1]$ in the case of $g=j-i+1\leq d$.
        \item If $t\in\{i,\dots,j_1\}$, then the algorithm finds $x[i,j_1]$ on Step $1$ and the  second invocation of $\FindFirst{k-1}$ on Step $5$ finds $x[i_2,j]$ in the case of $g=j-i+1\leq d$.   
        \item If $j_1<t<i_2$, then the third invocation of $\FindFirst{k-1}$ on Step $6$  finds $x[i_2,j]$ and the  forth invocation of $\FindFirst{k-1}$ on Step $7$  finds $x[i,j_1]$ in the case of $g=j-i+1\leq d$.
    \end{itemize}
    \item Assume that there are $j_1$ and $i_2$ such that $i<i_2<j_1<j$, $|f(x[i,j_1])|=|f(x[i_2,j])|=k-1$ and $sign(f(x[i,j_1]))=sign(f(x[i_2,j]))\in s$.
    \begin{itemize}
       \item If $t\in\{i_2,\dots,j\}$, then the algorithm finds $x[i_2,j]$ on Step $1$. After that, it finds $x[i,j_1]$ on Step $2$ in the case of $g=j-i+1\leq d$.
        \item If $t\in\{i,\dots,i_2-1\}$, then the algorithm finds $x[i,j_1]$ on Step $1$. After that, it finds $x[i_2,j]$ on Step $4$ in the case of $g=j-i+1\leq d$.  
    \end{itemize}
\end{enumerate}
By induction, the running time of each $\FindFrom{k-1}$ invocation is $O(\sqrt{d}(\log n)^{0.5(k-3)})$, and the running time of each $\FindFirst{k-1}$ invocation is $O(\sqrt{d}(\log n)^{0.5(k-2)})$.

\textbf{We now look at $\FindFixedLen{k}$:} by
construction and definition of $\FindFrom{k}$, if the algorithm
returns a value, it is a valid substring (with high probability).
If there exists a substring of length at least $d/2$, then any
query to $\FindFrom{k}$ with a value of $t$ in this interval will 
succeed, hence there are at least $d/2$ solutions. Therefore,
by Lemma~\ref{lem:modified_hoyer03}, the algorithm will find one
with high probability and make $O\left(\sqrt{\tfrac{r-l}{d/2}}\right)$ queries. 
Each query has complexity $O(\sqrt{d}(\log n)^{0.5(k-2)})$
by the previous paragraph, hence the running time is bounded by
$O(\sqrt{r-l}(\log n)^{0.5(k-2)})$.

\textbf{We can now analyze $\FindAny{k}$:}  Assume that the shortest $\pm k$-substring $x[i,j]$ is of length $g=j-i+1$. Therefore, there is $d$ from the search space such that $d\leq g \leq 2d$ and the $\FindFixedLen{k}$ procedure  returns the substring for this $d$ with constant success probability. So, the success probability of the randomized algorithm is at least $O(1/\log (l-r))$. Therefore, the amplitude amplification does $O(\sqrt{\log(r-l)})$ iterations. 
The running time of $\FindFixedLen{k}$ is $O(\sqrt{r-l}(\log n)^{0.5(k-2)})$ 
by induction, hence
the total running time is 
$O(\sqrt{r-l}(\log n)^{0.5(k-2)}\sqrt{\log (l-r)})=O(\sqrt{r-l}(\log n)^{0.5(k-1)})$.

\textbf{Finally, we analyze $\FindFirst{k}$:}
we can show the properties almost immediately from the proofs in \cite{Durr96aquantum,ll2016,k2014}.

\textbf{We now turn to the analysis of the errors.}
The case of $\FindFrom{k}$ is easy:
the algorithm makes at most $5$ recursive calls, each having a success
probability of $1-\varepsilon$. Hence it will succeed with probability
$(1-\varepsilon)^5$. We can boost this probability to $1-\varepsilon$
by repeating this algorithm a constant number of times. Note that this constant depends on $\varepsilon$.

The analysis of $\FindFixedLen{k}$ follows directly from \cite{Hoyer03}
and Lemma~\ref{lem:modified_hoyer03}:
since $\FindFrom{k}$ has two-sided error $\varepsilon$, there exists
a search algorithm with two-sided error $\varepsilon$.

\end{proof}


\subsection{The Algorithm for \texorpdfstring{$\dyck_{n,k}$}{DYCK\_\{n,k\}}}

To solve $\dyck_{n,k}$, we modify the input $x$. As the new input we use $x'=1^k x  0^k$. $\dyck_{n,k}(x)=1$ iff there are no $\pm(k+1)$-substrings in $x'$. This idea is presented in Algorithm \ref{alg:dycknh}. 
\begin{algorithm}[h]
    \caption{$\textsc{Dyck}_{n,k}()$. The Quantum Algorithm for $\dyck_{n,k}$.\label{alg:dycknh}}
    \begin{algorithmic}
        \State $x\gets 1^k  x  0^k$
        \State $v=\FindAny{(k+1)}(0,n-1, \{+1,-1\})$
        \If{$v=\NULL$}
            \State \Return $1$
        \EndIf
        \If{$v\neq\NULL$}
            \State \Return $0$
        \EndIf
    \end{algorithmic}
\end{algorithm}

\begin{theorem}\label{th:upper_bound} Algorithm \ref{alg:dycknh} solves $\dyck_{n,k}$  and
the expected running time of Algorithm \ref{alg:dycknh} is $O(\sqrt{n}(\log n)^{0.5k})$. The algorithm has two-side error probability  $\varepsilon<0.5$.
\end{theorem}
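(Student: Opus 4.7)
The proof has two ingredients: correctness of the reduction $x\mapsto x'=1^k x 0^k$, and the complexity/error bound, which should come essentially for free from Proposition~\ref{pr:findfrom} applied to $\FindAny{(k+1)}$.

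For correctness, I would introduce the running balance $B(\ell)=f(x'[0,\ell-1])$ and use the identity $f(x'[i,j])=B(j+1)-B(i)$. The entire claim then reduces to showing
\[
x\in\dyck_{n,k}\quad\Longleftrightarrow\quad \max_\ell B(\ell)-\min_\ell B(\ell)\leq k,
\]
since on the right-hand side the existence of a $\pm(k+1)$-substring is equivalent to $\max B-\min B\geq k+1$ (pick the positions where the max and min are attained, take the interval between them; the sign depends on which comes first).

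The forward direction is a direct computation: prepending $1^k$ drives $B$ down from $0$ to $-k$; on the $x$-block the balance becomes $-k+f(x[0,p])$, which lies in $[-k,0]$ by the bounded-height and non-negativity assumptions on a Dyck word; on the trailing $0^k$ block it climbs back from $-k$ to $0$. Hence $B\in[-k,0]$ and the range has size exactly $k$.

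The reverse direction splits into the reasons $x$ might fail to be in $\dyck_{n,k}$: (a) some prefix of $x$ has balance $<0$, (b) some prefix of $x$ has balance $>k$, (c) all prefixes lie in $[0,k]$ but $f(x)\neq 0$. In case (a) the minimum of $B$ dips below $-k$ while $B(0)=0$, so $\max B-\min B\geq k+1$; case (b) already exhibits a $+(k+1)$-substring inside $x$ itself. Case (c) cannot have $f(x)<0$ (this forces a negative prefix and reduces to (a)); if $f(x)>0$ then $\min B=-k$ (attained after $1^k$) while $\max B=f(x)>0$ (attained at the end of $x'$), giving range $\geq k+1$. In every failing case we obtain the desired $\pm(k+1)$-substring.

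For complexity and error, I would just invoke Proposition~\ref{pr:findfrom}: $\FindAny{(k+1)}$ on an input of length $n+2k$ runs in expected time $O\bigl(\sqrt{n+2k}\,(\log(n+2k))^{0.5k}\bigr)=O(\sqrt{n}(\log n)^{0.5k})$ (the $2k$ overhead is absorbed since the non-trivial regime has $k=o(\log n/\log\log n)$), and has two-sided error $\varepsilon<0.5$. The main obstacle is really the correctness analysis of case (c), where one must remember that $f(x)\neq 0$ forces the full excursion of the running balance across the padded word to exceed $k$, even though no internal substring of $x$ alone witnesses imbalance $>k$; the padding is precisely what turns imbalance at the end of $x$ into an observable $\pm(k+1)$-substring of $x'$.
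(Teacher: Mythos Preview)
Your proof is correct and in fact handles the correctness of the reduction more cleanly and more completely than the paper's own argument. The paper proceeds by a direct case analysis on the location of a $\pm(k+1)$-substring in $x'=1^kx0^k$: it distinguishes whether a $+(k+1)$-substring ends inside the trailing $0^k$ block or inside $x$, and whether a $-(k+1)$-substring starts inside the leading $1^k$ block, and in each case derives a violation of one of the three Dyck conditions (height exceeds $k$, a prefix of $x$ has negative balance, or $f(x)\neq 0$). This shows only one implication explicitly; the converse is left implicit. Your route via the running balance $B(\ell)$ and the identity ``a $\pm(k+1)$-substring exists iff $\max B-\min B\geq k+1$'' packages both implications symmetrically and avoids the position-based case split; the forward direction becomes a one-line computation that $B$ stays in $[-k,0]$, and the reverse direction is your (a)/(b)/(c) split on the reason $x$ fails, each producing the required range $\geq k+1$.

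For the complexity, both arguments are the same: invoke Proposition~\ref{pr:findfrom} for $\FindAny{k+1}$ on a word of length $n+2k$. The paper simplifies $O(\sqrt{n+2k}(\log(n+2k))^{0.5k})$ to $O(\sqrt{n}(\log n)^{0.5k})$ by assuming $n\geq 2k$; your remark that the extra factor is absorbed in the non-trivial regime $k=o(\log n/\log\log n)$ is the more honest justification, since $(2\log n)^{0.5k}=2^{0.5k}(\log n)^{0.5k}$ is only $O((\log n)^{0.5k})$ when $k$ is bounded or when we are in that regime anyway. One minor phrasing point: in your case~(c), ``$f(x)<0$ forces a negative prefix'' is literally because $f(x)$ \emph{is} the balance of the full prefix $x[0,n-1]$, so the sub-case is vacuous rather than reducing to~(a); the conclusion is unaffected.
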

\begin{proof}
Let us show that if $x'$ contains $\pm(k+1)$-substring then one of  three conditions of $\dyck_{n,k}$ problem is broken.

Assume that $x'$ contains $(k+1)$ substring $x'[i,j]$. If $j\geq k+n$, then $f(x[i-k,n-1])>0$, because $f(x'[n,j])=j-n+1\leq k<k+1$. Therefore, prefix $x[0,i-k]$ is such that $f(x[0,i-k-1])<0$ or $f(x[0,n-1])>0$ because  $f(x[0,n-1])=f(x[0,i-k])+f(x[i-k-1,n-1])$. So, in that case we break one of conditions of $\dyck_{n,k}$ problem.

If $j<k+n$ then $x[i-k,j-k]$ is $(k+1)$ substring of $x$. 

Assume that $x'$ contains $-(k+1)$ substring $x'[i,j]$. If $i< k$, then $f(x[0,j-k])<0$, because $f(x'[i,k-1])=-(k-i)\geq -k>-(k+1)$ and $f(x[0,j-k])= f(x'[k,j])=f(x[i,j])-f(x[i,k-1])$. So, in that case the second condition of $\dyck_{n,k}$ problem is broken.

The complexity of Algorithm \ref{alg:dycknh} is the same as the complexity of $\FindAny{k+1}$ for $x'$ that is $O(\sqrt{n+2k}(\log(n+2k))^{0.5k})$ due to Proposition \ref{pr:findfrom}.

We can assume $n\geq 2k$ (otherwise, we can update $k\gets n/2$). Hence,
\[O(\sqrt{n+2k}(\log(n+2k))^{0.5k})=
O(\sqrt{2n}(\log(2n))^{0.5k})=O(\sqrt{n}(2\log{n})^{0.5k})=O(\sqrt{n}(\log{n})^{0.5k})\]

The error probability is the same as the complexity of $\FindAny{k+1}$. 
\end{proof}

\section{Lower Bounds for Dyck Languages with Bounded Height}

Now let's show some lower bounds for Dyck languages with bounded height.

Let $k\in \mathbb{N^+}$. Let $M^0_k=\{a,b\}$. For all $i \in \mathbb{N^+}$, let $M^i_k=\{a^kwb^k|w\in (M^{i-1}_{k})^{2k-1},f(w)=\pm 1\}$. Here $f$ is the balance function that we defined in the notation paragraph.

According to our construction, for all $i\in \mathbb{N}$, for all $m \in M^i_k$, we have $f(m)=\pm 1$. All words in $M^i_k$ have the same length that we define as $l_k(i)$.

\begin{lemma}
$l_k(i)\sim 2\cdot(2k)^i$ as $k\to\infty$.
\end{lemma}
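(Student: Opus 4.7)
The plan is to set up a linear recurrence for $l_k(i)$ directly from the definition of $M^i_k$, solve it in closed form, and then extract the asymptotic as $k\to\infty$ with $i$ fixed.

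First I would observe that every element of $M^i_k$ is obtained by concatenating $2k-1$ words of $M^{i-1}_k$ and then prepending $a^k$ and appending $b^k$. Since all words of $M^{i-1}_k$ share the common length $l_k(i-1)$ (which itself can be verified by a trivial induction on $i$), this immediately yields
\[
l_k(i) = 2k + (2k-1)\,l_k(i-1),\qquad l_k(0)=1.
\]

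Next I would solve this first-order linear recurrence. The homogeneous solution is of the form $C(2k-1)^i$ and a constant particular solution $A$ must satisfy $A = 2k + (2k-1)A$, giving $A = -\tfrac{k}{k-1}$. Matching the initial condition $l_k(0)=1$ produces
\[
l_k(i) = \frac{(2k-1)^{i+1}}{k-1} - \frac{k}{k-1}.
\]

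Finally, I would take the limit as $k\to\infty$ with $i$ fixed. The second term is bounded, and for the first term
\[
\frac{(2k-1)^{i+1}}{k-1}
 = \frac{(2k)^{i+1}}{k}\cdot\frac{1}{1-1/k}\cdot\left(1-\tfrac{1}{2k}\right)^{i+1}
 = 2(2k)^i\bigl(1+o(1)\bigr),
\]
which establishes $l_k(i)\sim 2(2k)^i$ as $k\to\infty$. There is no real obstacle here; the only thing to be careful about is that the asymptotic is in $k$ (with $i$ fixed), so the $(1-1/(2k))^{i+1}$ factor tends to $1$ for each fixed $i$, and the additive constant $\tfrac{k}{k-1}\to 1$ is absorbed into the $o(2(2k)^i)$ error.
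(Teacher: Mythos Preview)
Your proposal is correct and follows essentially the same approach as the paper: set up the recurrence $l_k(i)=2k+(2k-1)l_k(i-1)$ with $l_k(0)=1$, solve it explicitly, and read off the asymptotic. Your closed form $\frac{(2k-1)^{i+1}}{k-1}-\frac{k}{k-1}$ is in fact equal to the paper's expression $(2k-1)^{i-1}\bigl((4k-1)+\tfrac{k}{k-1}\bigr)-\tfrac{k}{k-1}$, since $(4k-1)+\tfrac{k}{k-1}=\tfrac{(2k-1)^2}{k-1}$.
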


\begin{proof}

We have $l_k(0)=1$. For all $i \in \mathbb{N^+}$, $l_k(i)=2k+(2k-1)l_k(i-1)$.

Thus, $l_k(i)=(2k-1)^{i-1}((4k-1)+\frac{k}{k-1})-\frac{k}{k-1}\sim 2\cdot(2k)^i$ as $k\to\infty$.

\end{proof}

Define $h_k(i)=\max_{m\in M_k(i)}(h(m))$.

\begin{lemma}
For all $k\in \mathbb{N}^+, i\in \mathbb{N}^+$, $h_k(i)=(i+1)k$. Furthermore, for all $m\in M_k^i$, $h^-(m)\geq 0$.
\end{lemma}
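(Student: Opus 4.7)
The plan is to prove both claims simultaneously by induction on $i$, exploiting the recursive decomposition $m = a^k w_1 \cdots w_{2k-1} b^k$ with $w_j\in M_k^{i-1}$, $f(w_j)=\pm 1$, and $\sum_j f(w_j)=\pm 1$. I will introduce the boundary balances $B_j := k + \sum_{j'\le j} f(w_{j'})$, which record the value of the running balance of $m$ at the junctions between consecutive sub-pieces. A direct lattice-path estimate shows that the partial sums of $2k-1$ steps of $\pm 1$ summing to $\pm 1$ lie in $[-(k-1),k]$, so $B_j\in[1,2k]$ throughout.

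For the base case $i=1$, every $w\in\{a,b\}^{2k-1}$ with $f(w)=\pm 1$ has exactly $k$ copies of one symbol and $k-1$ of the other. Hence $h(m) = k + \max_j(\text{prefix sum of }w) \le 2k$, with equality at $w=a^k b^{k-1}$, yielding $h_k(1)=(1+1)k$. The bound $h^-(m)\ge 0$ then reduces to checking that the running balance along $a^k\cdot w\cdot b^k$ never dips below $0$, which follows from $B_j\ge 1$ and the fact that between consecutive junctions the balance changes by at most one.

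For the inductive step, the inductive hypothesis $h^-(w_j)\ge 0$ forces the running balance of $m$ inside $w_j$ to lie in $[B_{j-1},\,B_{j-1}+h(w_j)]$, so $h(m) = \max_j\bigl(B_{j-1}+h(w_j)\bigr)$. Combining the inductive $h(w_j)\le ik$ with $B_{j-1}\le 2k$ only yields $h(m)\le (i+2)k$, off by $k$ from the target. The main obstacle is sharpening this bound to $(i+1)k$: the crucial observation is that $B_{j-1}$ attains its extremum $2k$ only when all earlier pieces $w_{j'}$ carry $f=+1$, which then forces $f(w_j)=-1$; a parallel induction tracking the maxima of $h$ separately for $f(m)=+1$ and $f(m)=-1$ elements of $M_k^{i-1}$ shows that the $-1$-balanced elements have strictly smaller maximum height, and this trade-off exactly absorbs the extra factor of $k$. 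The matching lower bound is obtained by a greedy construction which places the extremal sub-piece at the position of maximum boundary balance, concatenating all $+1$-balanced pieces before the $-1$-balanced ones. The claim $h^-(m)\ge 0$ in the inductive step is then immediate from $B_j\ge 1$ together with the inductive $h^-(w_j)\ge 0$, since these two facts imply that the running balance never drops below $0$ while traversing $w$, whereas the prefix balances during the initial $a^k$ and the final $b^k$ are handled trivially by the range of $B_j$.
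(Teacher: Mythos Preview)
Your instinct that the naive inductive bound overshoots by $k$ is exactly right, and your proposed fix—running a parallel induction that separates the maxima over the $f=+1$ and $f=-1$ sub-populations—is the natural thing to try. Unfortunately, if you actually carry out that parallel induction, it does \emph{not} collapse the gap: writing $h_k^{+}(i)$ and $h_k^{-}(i)$ for the two maxima, your own boundary-balance analysis gives
\[
h_k^{+}(i)=\max\bigl((2k-1)+h_k^{+}(i-1),\;2k+h_k^{-}(i-1)\bigr),
\qquad
h_k^{-}(i)=h_k^{+}(i)-1,
\]
whence $h_k^{+}(i)=(2k-1)i+1$, not $(i+1)k$. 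Concretely, for $k=2$, $i=2$, take $w_1=w_2=aaaabbb$ and $w_3=aaabbbb$; then $m=aa\,w_1w_2w_3\,bb\in M_2^2$ has $h(m)=7>6=(2+1)\cdot 2$. So the formula $h_k(i)=(i+1)k$ is simply false for $i\ge 2$, and no amount of refining the induction will prove it.

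The second clause is likewise false as written: any $m\in M_k^i$ with $f(m)=-1$ has $h^{-}(m)\le f(m)=-1$; e.g.\ $m=aabbabb\in M_2^1$ has prefix balances $1,2,1,0,1,0,-1$. Your claim that $B_j\ge 1$ is off for the same reason—when $f(w)=-1$ the partial sum can reach $-k$, so $B_j$ can hit $0$ (and the full-word prefix balance hits $-1$).

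The paper's own proof is the single sentence ``This can be shown easily by induction on $i$'', so there is nothing substantive to compare your argument against. What the paper actually \emph{uses} downstream is only that $h_k(i)=\Theta(ik)$ and that $mb$ is a valid Dyck prefix when $f(m)=+1$; both of these do follow from your framework once you replace the claimed exact value by the correct recurrence above. If you want to salvage the write-up, state and prove $h_k(i)=(2k-1)i+1$ together with $h^{-}(m)\ge -1$ (with equality only at the final position when $f(m)=-1$); your boundary-balance decomposition then goes through cleanly, with $B_j\in[0,2k]$ rather than $[1,2k]$.
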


\begin{proof}
This can be shown easily by induction on $i$.
\end{proof}

Define $$g_k: \{-1,1\}^{2k-1} \supset C \longrightarrow \{-1,1\}$$
$$(x_1,x_2,\cdots,x_{2k-1}) \mapsto \sum_{i=1}^{2k-1}x_i.$$
By induction, define $g_k^1=g_k$ and $g_k^{i+1}=g_k\circ(\overbrace{g_k^i,\cdots,g_k^i}^{2k-1\ \text{times}})$.

\begin{lemma}
    \label{lemAdv}
    For all $k\in \mathbb{N}^+$, $i\in \mathbb{N}^+$, $ADV^{\pm}(g_k^i)\geq k^i$. 
\end{lemma}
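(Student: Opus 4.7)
The plan is to combine the exact composition property of $Adv^{\pm}$ stated in Section~\ref{s:defs} with a direct lower bound for $g_k$ itself. Since $Adv^{\pm}(f\circ g)=Adv^{\pm}(f)\cdot Adv^{\pm}(g)$ even for partial Boolean functions, iteration gives $Adv^{\pm}(g_k^i)=Adv^{\pm}(g_k)^i$; and since $Adv^{\pm}\geq Adv$, it suffices to exhibit a nonnegative adversary matrix witnessing $Adv(g_k)\geq k$.

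The domain of $g_k$ is $C=\{x\in\{-1,+1\}^{2k-1}:\sum_j x_j=\pm 1\}$. Partition it as $C=X\sqcup Y$ with $X=g_k^{-1}(+1)$ (exactly $k$ coordinates equal to $+1$) and $Y=g_k^{-1}(-1)$ (exactly $k-1$ coordinates equal to $+1$); each has cardinality $\binom{2k-1}{k}$. Define the symmetric matrix $\Gamma$ on $C\times C$ by
\[
\Gamma_{xy}=\begin{cases} 1 & \text{if $\{x,y\}$ has one endpoint in $X$ and one in $Y$, and $x,y$ differ in exactly one coordinate,}\\ 0 & \text{otherwise.}\end{cases}
\]
Then $\Gamma$ is nonnegative and $\Gamma_{xy}=0$ whenever $g_k(x)=g_k(y)$, so it is a valid positive-adversary matrix.

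For the numerator, each $x\in X$ has exactly $k$ neighbors in $Y$, obtained by flipping one of its $k$ coordinates equal to $+1$; symmetrically each $y\in Y$ has exactly $k$ neighbors in $X$, obtained by flipping one of its $k$ coordinates equal to $-1$. Thus $\Gamma$ is the adjacency matrix of a $k$-regular bipartite graph, so $\Gamma\mathbf{1}=k\mathbf{1}$ and $\|\Gamma\|\leq \max_x\sum_y|\Gamma_{xy}|=k$, giving $\|\Gamma\|=k$. For the denominator, note that any nonzero $\Gamma_{xy}$ already forces $x$ and $y$ to differ in exactly one coordinate; hence, for a fixed index $i$, the matrix $\Gamma^{(i)}$ has at most one nonzero entry per row and per column (namely the unique flip of $x$ at position $i$, when it lies in $C$). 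Therefore $\Gamma^{(i)}$ is a partial permutation matrix with $\|\Gamma^{(i)}\|\leq 1$.

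Combining these bounds yields $Adv(g_k)\geq \|\Gamma\|/\max_i\|\Gamma^{(i)}\|\geq k$, and composition then gives $Adv^{\pm}(g_k^i)\geq k^i$. The one mildly technical point is the spectral-norm computation for the bipartite $\Gamma$, but the $k$-regularity makes both the eigenvalue and the row-sum bound immediate; everything else is bookkeeping on top of the cited composition lemma.
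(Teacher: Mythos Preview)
Your proof is correct and follows the same overall strategy as the paper: establish $Adv^{\pm}(g_k)\geq k$ and then invoke exact composition of $Adv^{\pm}$. The only difference is that the paper outsources the base case to a reference (\cite{b2014}, Prop.~3.32), whereas you supply the explicit nonnegative adversary matrix---the standard Hamming-distance-one construction for the promise-majority function---and verify $\|\Gamma\|=k$, $\|\Gamma^{(i)}\|\leq 1$ directly.
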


\begin{proof}
Inspiring from \cite{b2014} Prop 3.32, we can show that $ADV^{\pm}(g_k)\geq k$. 

Since $ADV^{\pm}$ composes exactly even for partial Boolean functions $f$ and $g$, meaning,
$Adv^{\pm}(f \circ g) = Adv{\pm}(f)\cdot Adv^{\pm}(g)$ \cite[Lemma 6]{Kimmel}, we have $ADV^{\pm}(g_k^i)\geq k^i$.
\end{proof}
\smallskip

The reason that we introduced $g_k^i$ is the following:
Let $m \in M^i_k$, $mb \in \text{Dyck}_{(i+1)k}(l(i))$ iif $g_k^i(m')=1$, where $m'$ is obtained from $m$ by removing all the $a^k$s' and $b^k$s' appeared in the construction of each $M_k^j$ for $j=1$ to $i$. 
\medskip

Now let's study the lower bound of Dyck language with bounded height.

\begin{theorem}\label{th:lower_bound_all}
For all $\epsilon>0$, $Q(\text{Dyck}_{\Omega(n^\epsilon)}(n))=\Theta(n)$
\end{theorem}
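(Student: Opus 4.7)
The plan is to establish the $\Omega(n)$ lower bound; the matching $O(n)$ upper bound is trivial (query every input bit). The idea is to instantiate the reduction from the hard function $g_k^i$ already set up in the preceding lemmas, choosing the parameters $i$ and $k$ so that the hard instance fits inside a $\dyck_h$ instance of length $n$.

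Given $\epsilon > 0$, I fix an integer $i$ with $1/i < \epsilon$ (for instance $i = \lceil 1/\epsilon\rceil + 1$). For every sufficiently large even $n$ and every $h$ with $h \geq c\,n^\epsilon$, I choose $k = k(n)$ as the largest integer satisfying $l_k(i) + 1 \leq n$. From the asymptotic $l_k(i) \sim 2(2k)^i$, this yields $k = \Theta(n^{1/i})$; consequently $k^i = \Theta(n)$ and $(i+1)k = O(n^{1/i}) = o(n^\epsilon)$, so the two constraints $l_k(i)+1 \leq n$ and $(i+1)k \leq h$ are jointly satisfiable for $n$ large enough.

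Next I describe the reduction $g_k^i \leqslant \dyck_h(n)$. Given an input $m'$ to $g_k^i$, I expand it into a word $m \in M_k^i$ by inserting the structural $a^k$ and $b^k$ blocks prescribed by the recursive definition of $M_k^i$; since the inserted bits are fixed constants, each query to the Dyck input costs at most one query to $m'$. By the observation preceding the theorem, $mb \in \dyck_{(i+1)k}(l_k(i)+1)$ iff $g_k^i(m') = 1$. To reach length exactly $n$, I pad to form $w = mb\,(ab)^p$ with $p = (n - l_k(i) - 1)/2$, which is a nonnegative integer since an easy induction from $l_k(0) = 1$ and $l_k(i) = 2k + (2k-1)l_k(i-1)$ shows $l_k(i)$ is odd, and $n$ is even. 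I then verify $w \in \dyck_h(n)$ iff $g_k^i(m') = 1$: when $g_k^i(m') = 1$, the word $mb$ is Dyck of depth at most $(i+1)k \leq h$ and $(ab)^p$ is Dyck of depth $1 \leq h$, so their concatenation lies in $\dyck_h(n)$; when $g_k^i(m') = -1$, one has $f(m) = -1$ (since the structural blocks contribute $0$ to the balance, so $f(m) = g_k^i(m')$), hence $f(w) = -2 \neq 0$ and $w \notin \dyck_h(n)$.

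Finally, this reduction yields $Q(\dyck_h(n)) \geq Q(g_k^i) = \Omega(Adv^{\pm}(g_k^i)) = \Omega(k^i) = \Omega(n)$, combining $Q = \Theta(Adv^{\pm})$ with the stated lemma $Adv^{\pm}(g_k^i) \geq k^i$. The main obstacle I anticipate is the joint parameter bookkeeping --- verifying that the single choice $k = \Theta(n^{1/i})$ simultaneously realizes $l_k(i)+1 \leq n$, $(i+1)k \leq h$, and $k^i = \Omega(n)$ --- which rests on having picked $1/i$ strictly below $\epsilon$. A secondary technicality is that the padding $(ab)^p$ has depth $1$ rather than $0$, so both directions of the Dyck-membership equivalence must be checked, and the parity of $l_k(i)+1$ must be verified for $p$ to be integral.
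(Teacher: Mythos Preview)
Your proof is correct and follows essentially the same route as the paper: fix a constant depth $i$ with $1/i<\epsilon$, let $k=\Theta(n^{1/i})$, and invoke the reduction from $g_k^i$ via the $M_k^i$ construction together with $Adv^{\pm}(g_k^i)\ge k^i$. Your version is in fact more careful than the paper's sketch, making explicit the padding to length exactly $n$, the parity check on $l_k(i)$, and the verification that $(i+1)k\le h$ once $1/i<\epsilon$.
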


\begin{proof}
We know that $l_k(i)\sim 2\cdot(2k)^i$. $h_k(i)=(i+1)k$. From \cite{Reichardt}, we have $Q(g^i)=ADV^\pm(g_k^i)$. Thus Lemma \ref{lemAdv} shows that $Q(g^i)\geq k^i$. 
\smallskip

By taking $i=constant$, $k=\Theta(n^{1/i})$, we have $l_k(i)=n$ and $Q(g^i)=\Theta(n)$. Furthermore, by the equivalence above, computing $g_k^i$ corresponds to checking
if words of height $\Theta(n^{1/i})$ are in Dyck. Thus $Q(\text{Dyck}_{\Theta(n^{1/i})}(n))=\Theta(n)$. This is true for all $i \in \mathbb{N}^+$. Therefore, for all $\epsilon>0$, $Q(\text{Dyck}_{\Omega(n^\epsilon)}(n))=\Theta(n)$.
\end{proof}

\begin{theorem}\label{th:lower_bound_omega_log}
  $Q(\text{Dyck}_{\Theta(i\cdot n^{1/i})}(n))=\Omega(n/2^i)$ for $i=i(n)$, such that $i(n)\in [\omega(1),o(\log n)]$ as $n \to \infty$.
\end{theorem}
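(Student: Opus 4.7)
The plan is to follow the scheme of Theorem~\ref{th:lower_bound_all} but let $k$ depend on $n$ as well as $i$. Specifically, I would choose $k$ to be (roughly) the largest integer such that $l_k(i) \leq n$. Solving $l_k(i) \sim 2(2k)^i = n$ yields $k = \Theta(n^{1/i})$, more precisely $k \approx \tfrac{1}{2}(n/2)^{1/i}$. Since $i(n) = o(\log n)$, we have $n^{1/i(n)} \to \infty$ as $n \to \infty$, so $k$ is a well-defined growing integer. With this choice the height is $h_k(i) = (i+1)k = \Theta(i \cdot n^{1/i})$, matching the parameter in the statement. If $l_k(i) < n$, I would pad each instance by appending matched $ab$-pairs to reach length exactly $n$; such padding contributes height at most $1 \leq (i+1)k$ and preserves membership in the Dyck language, so the reduction is unaffected.

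As in the proof of Theorem~\ref{th:lower_bound_all}, the correspondence $m \mapsto mb$ (together with the padding) gives a reduction from computing $g_k^i$ on $M_k^i$ to deciding $\text{Dyck}_{(i+1)k}(n)$. Combining Reichardt's theorem $Q(f) = \Theta(Adv^\pm(f))$ with Lemma~\ref{lemAdv} yields
\[
Q(\text{Dyck}_{(i+1)k}(n)) \;\geq\; Q(g_k^i) \;=\; \Theta(Adv^\pm(g_k^i)) \;\geq\; k^i.
\]
Plugging in $k \approx \tfrac{1}{2}(n/2)^{1/i}$ gives $k^i = \tfrac{n}{2^{i+1}}(1+o(1)) = \Omega(n/2^i)$, the claimed bound.

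The main obstacle is ensuring the asymptotic estimates hold uniformly when both $k$ and $i$ grow with $n$. The estimate $l_k(i) \sim 2(2k)^i$ from the earlier lemma was proved as $k \to \infty$ with $i$ fixed; applying it here requires $(1 - 1/(2k))^i \to 1$, i.e., $i/k \to 0$. Given $k = \Theta(n^{1/i})$ and $i = o(\log n)$, we have $n^{1/i} = 2^{(\log n)/i} = 2^{\omega(1)}$, which grows faster than any polynomial in $i$, so $i/k \to 0$ and the approximation is valid. The same condition ensures that rounding $k$ to an integer perturbs $k^i$ by only a multiplicative $(1+o(1))$ factor, so the final bound $\Omega(n/2^i)$ is unaffected.
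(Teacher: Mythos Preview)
Your approach is exactly the paper's: set $k=\Theta(n^{1/i})$, use the reduction from $g_k^i$ to $\dyck_{(i+1)k}$, and invoke $Adv^\pm(g_k^i)\geq k^i$ to get $\Omega(n/2^i)$. The padding and explicit computation $k^i\approx n/2^{i+1}$ are welcome details the paper omits.

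There is one slip in your final paragraph. You assert that $n^{1/i}=2^{(\log n)/i}=2^{\omega(1)}$ ``grows faster than any polynomial in $i$,'' hence $i/k\to 0$. That inference is false in general: take $i=(\log n)/\log\log\log n$, which is $o(\log n)$; then $n^{1/i}=\log\log n$, while $i\gg \log\log n$, so $i/k\to\infty$, not $0$. Thus neither the asymptotic $l_k(i)\sim 2(2k)^i$ nor the claim that integer rounding of $k$ perturbs $k^i$ by only $1+o(1)$ is justified across the whole range $i\in[\omega(1),o(\log n)]$. The paper's two-line proof glosses over exactly the same point, so you are not worse off than the original---but your attempted justification is incorrect as written. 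If you want a version that avoids this, work directly with the exact recurrence bound $l_k(i)=\Theta((2k-1)^i)$ (absolute constants) and use $k>(2k-1)/2$ to get $k^i>(2k-1)^i/2^i$, which sidesteps the $(1+o(1))$ issue for the approximation step; the integer-rounding step, however, genuinely needs $i=O(k)$, and this is a real constraint neither you nor the paper handles.
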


\begin{proof}
We know that $l_k(i)\sim 2\cdot(2k)^i$ when $k=k(n)=\omega(1)$. $h_k(i)=(i+1)k\sim ik$ when $i=\omega(1)$. $Q(g^i)=ADV^\pm(g_k^i)\geq k^i$.
\smallskip

By replacing $k$ by $\Theta(n^{1/i})$, we obtain $Q(\text{Dyck}_{\Theta(i\cdot n^{1/i})}(n))=\Omega(n/2^i)$.
\end{proof}

\begin{theorem}\label{th:lower_bound_small_eps}
 For every $0<\epsilon\leq 1-\log_3(2)\approx 0.37$, there exists $c>0$ such that  $Q(\text{Dyck}_{c\log(n)}(n))=\Omega(n^{1-\epsilon})$.
\end{theorem}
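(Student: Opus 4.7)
The plan is to reuse the family $M_k^i$, the function $g_k^i$, and the adversary-plus-reduction skeleton of Theorems~\ref{th:lower_bound_all} and \ref{th:lower_bound_omega_log}, but in the regime complementary to theirs: I would \emph{fix} $k$ as a constant depending only on $\epsilon$ and let $i\to\infty$. Because the asymptotic $l_k(i)\sim 2(2k)^i$ stated earlier is taken as $k\to\infty$ (with $i$ fixed), I first solve the recurrence $l_k(i)=2k+(2k-1)l_k(i-1)$ explicitly, obtaining $l_k(i)=\tfrac{(2k-1)^{i+1}-k}{k-1}$, and hence $\log l_k(i) = i\log(2k-1)+O(1)$ as $i\to\infty$.

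Given $\epsilon\in(0,1-\log_3 2]$, I would pick the smallest integer $k\geq 2$ with $\tfrac{\log k}{\log(2k-1)}\geq 1-\epsilon$. Such a $k$ exists because the map $k\mapsto \tfrac{\log k}{\log(2k-1)}$ is increasing on $\{2,3,\dots\}$ and tends to $1$; its value at $k=2$ is $\tfrac{\log 2}{\log 3}=\log_3 2$, which is exactly where the hypothesis $\epsilon\leq 1-\log_3 2$ originates. For any sufficiently large $n$, I then take $i$ as the largest integer with $l_k(i)+1\leq n$, so that $(2k-1)^i=\Theta(n)$ and
\[
k^i=\bigl((2k-1)^i\bigr)^{\log k/\log(2k-1)}=\Theta\bigl(n^{\log k/\log(2k-1)}\bigr)=\Omega(n^{1-\epsilon}),
\]
while the height is bounded by $(i+1)k\leq \tfrac{k}{\log(2k-1)}\log n+O(1)\leq c\log n$ for a constant $c=c(\epsilon)$.

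To conclude, I would use the reduction from $g_k^i$ to $\dyck_{c\log n}(n)$ given implicitly just before Theorem~\ref{th:lower_bound_all}: given $x\in\{-1,+1\}^{(2k-1)^i}$, build $m(x)\in M_k^i$ by substituting $+1\mapsto a$ and $-1\mapsto b$ in the $x$-positions (the $a^k$ and $b^k$ separators at every nesting level are deterministic), append one $b$, and pad with $()$-pairs until the total length is exactly $n$. The padding is deterministic, so each oracle access to the padded word costs $O(1)$ queries to $x$; appending $()^r$ to a string whose final balance is either $0$ or $-2$ never raises the maximum height above $(i+1)k$, and leaves Dyck membership equivalent to the condition $g_k^i(x)=1$. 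Combining this with $Q(g_k^i)=\Theta(Adv^\pm(g_k^i))\geq k^i$ from Lemma~\ref{lemAdv} yields $Q(\dyck_{c\log n}(n))\geq Q(g_k^i)=\Omega(n^{1-\epsilon})$. The only remaining obstacle is a small parity check: $l_k(i)+1+2r$ has a fixed parity for each $k$, so when $n$ has the wrong parity I must absorb a one-symbol mismatch (by using $l_k(i-1)$ with more padding, or by a trivial one-symbol extension that preserves the Dyck/non-Dyck status); this only affects constants and not the exponent.
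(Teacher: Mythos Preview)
Your approach is essentially the same as the paper's: fix an integer $k\geq 2$ depending only on $\epsilon$ (via $\tfrac{\log k}{\log(2k-1)}\geq 1-\epsilon$), let $i=\Theta(\log n)$, and combine the adversary bound $Adv^\pm(g_k^i)\geq k^i$ with the reduction to $\dyck_{(i+1)k}$ to get $\Omega(n^{1-\epsilon})$. You have simply been more careful than the paper's sketch—solving the recurrence for $l_k(i)$ exactly (rather than relying on the $k\to\infty$ asymptotic), getting the monotonicity of $k\mapsto\log_{2k-1}k$ in the right direction, and spelling out the padding and parity details—so the proposal is correct and matches the intended argument.
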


\begin{proof}
We know that $l_k(i)\sim 2\cdot(2k-1)^i$, $h_k(i)=(i+1)k\sim ik$ when $i(n)\rightarrow \infty $ and $k$ equals to a constant, $k>1$. $Q(g^i)=ADV^\pm(g_k^i)\geq k^i$.
\smallskip

By taking $i$ as $\Theta(\log(n))$, we obtain $h=c\log(n)$ for some $c>0$, $k^i=2(2k-1)^{i(1-\epsilon)}$ for $\epsilon=1-\log_{2k-1}(k)$.
Since $k$ is an integer, $\log_{2k-1}(k)\leq \log_3(2)$ when $k\geq 2$.

For every $0<\epsilon\leq 1-\log_3(2)\approx 0.37$, there exists $c>0$ such that  $Q(\text{Dyck}_{c\log(n)}(n))=\Omega(n^{1-\epsilon})$.
\end{proof}

\bibliographystyle{plainurl}
\bibliography{references}

\begin{thebibliography}{10}

\bibitem{Aaronson2018AQQ}
Scott Aaronson, Daniel Grier, and Luke Schaeffer.
\newblock A quantum query complexity trichotomy for regular languages.
\newblock {\em Electronic Colloquium on Computational Complexity (ECCC)},
  26:61, 2018.

\bibitem{Alon}
Noga Alon, Michael Krivelevich, Ilan Newman, and Mario Szegedy.
\newblock Regular languages are testable with a constant number of queries.
\newblock {\em SIAM J. Comput.}, 30(6):1842--1862, December 2001.
\newblock URL: \url{http://dx.doi.org/10.1137/S0097539700366528}, \href
  {http://dx.doi.org/10.1137/S0097539700366528}
  {\path{doi:10.1137/S0097539700366528}}.

\bibitem{riga}
Andris Ambainis, Kaspars Balodis, Janis Iraids, Krisjanis Pr\=usis, and Juris
  Smotrov.
\newblock Quantum lower bounds for 2d-grid and dyck language, 2019.

\bibitem{ambainis2015automata}
Andris Ambainis and Abuzer Yakaryılmaz.
\newblock Automata and quantum computing, 2015.
\newblock \href {http://arxiv.org/abs/1507.01988} {\path{arXiv:1507.01988}}.

\bibitem{BABU201313}
Ajesh Babu, Nutan Limaye, Jaikumar Radhakrishnan, and Girish Varma.
\newblock Streaming algorithms for language recognition problems.
\newblock {\em Theoretical Computer Science}, 494:13 -- 23, 2013.
\newblock Theory and Applications of Models of Computation (TAMC 2010).
\newblock URL:
  \url{http://www.sciencedirect.com/science/article/pii/S0304397512011437},
  \href {http://dx.doi.org/https://doi.org/10.1016/j.tcs.2012.12.028}
  {\path{doi:https://doi.org/10.1016/j.tcs.2012.12.028}}.

\bibitem{b2014}
Aleksandrs Belovs.
\newblock {\em Applications of the Adversary Method in Quantum Query
  Algorithms}.
\newblock PhD thesis, 2014.

\bibitem{bbht98}
Michel Boyer, Gilles Brassard, Peter H{\o}yer, and Alain Tapp.
\newblock Tight bounds on quantum searching.
\newblock {\em Fortschritte der Physik}, 46(4-5):493--505, 1998.

\bibitem{buhrman2019quantum}
Harry Buhrman, Subhasree Patro, and Florian Speelman.
\newblock The quantum strong exponential-time hypothesis, 2019.
\newblock \href {http://arxiv.org/abs/1911.05686} {\path{arXiv:1911.05686}}.

\bibitem{Deutsch}
D~Deutsch and R~Jozsa.
\newblock Rapid solution of problems by quantum computation.
\newblock Technical report, Bristol, UK, UK, 1992.

\bibitem{Durr96aquantum}
Christoph Dürr and Peter Høyer.
\newblock A quantum algorithm for finding the minimum.
\newblock In {\em in LANL e-print quantph/9607014, http://xxx.lanl.gov}, 1996.

\bibitem{g96}
Lov~K Grover.
\newblock A fast quantum mechanical algorithm for database search.
\newblock In {\em Proceedings of the twenty-eighth annual ACM symposium on
  Theory of computing}, pages 212--219. ACM, 1996.

\bibitem{Hoyer03}
Peter H{\o}yer, Michele Mosca, and Ronald de~Wolf.
\newblock Quantum search on bounded-error inputs.
\newblock In Jos C.~M. Baeten, Jan~Karel Lenstra, Joachim Parrow, and
  Gerhard~J. Woeginger, editors, {\em Automata, Languages and Programming},
  pages 291--299, Berlin, Heidelberg, 2003. Springer Berlin Heidelberg.

\bibitem{kimmel2012quantum}
Shelby Kimmel.
\newblock Quantum adversary (upper) bound.
\newblock In {\em International Colloquium on Automata, Languages, and
  Programming}, pages 557--568. Springer, 2012.

\bibitem{Kimmel}
Shelby Kimmel.
\newblock Quantum adversary (upper) bound.
\newblock In {\em Proceedings of the 39th International Colloquium Conference
  on Automata, Languages, and Programming - Volume Part I}, ICALP'12, pages
  557--568, Berlin, Heidelberg, 2012. Springer-Verlag.
\newblock URL: \url{http://dx.doi.org/10.1007/978-3-642-31594-7_47}, \href
  {http://dx.doi.org/10.1007/978-3-642-31594-7_47}
  {\path{doi:10.1007/978-3-642-31594-7_47}}.

\bibitem{Kleene56}
S.~C. Kleene.
\newblock Representation of events in nerve nets and finite automata.
\newblock In Claude Shannon and John McCarthy, editors, {\em Automata Studies},
  pages 3--41. Princeton University Press, Princeton, NJ, 1956.

\bibitem{k2014}
Robin Kothari.
\newblock An optimal quantum algorithm for the oracle identification problem.
\newblock In {\em 31st International Symposium on Theoretical Aspects of
  Computer Science}, page 482, 2014.

\bibitem{ll2016}
C.~Y.-Y. Lin and H.-H. Lin.
\newblock Upper bounds on quantum query complexity inspired by the
  elitzur--vaidman bomb tester.
\newblock {\em Theory of Computing}, 12(18):1--35, 2016.

\bibitem{Magniez}
Fr{\'e}d{\'e}ric Magniez, Claire Mathieu, and Ashwin Nayak.
\newblock Recognizing well-parenthesized expressions in the streaming model.
\newblock In {\em Proceedings of the Forty-second ACM Symposium on Theory of
  Computing}, STOC '10, pages 261--270, New York, NY, USA, 2010. ACM.
\newblock URL: \url{http://doi.acm.org/10.1145/1806689.1806727}, \href
  {http://dx.doi.org/10.1145/1806689.1806727}
  {\path{doi:10.1145/1806689.1806727}}.

\bibitem{Nielsen}
Michael~A. Nielsen and Isaac~L. Chuang.
\newblock {\em Quantum Computation and Quantum Information: 10th Anniversary
  Edition}.
\newblock Cambridge University Press, New York, NY, USA, 10th edition, 2011.

\bibitem{RABIN1963230}
Michael~O. Rabin.
\newblock Probabilistic automata.
\newblock {\em Information and Control}, 6(3):230 -- 245, 1963.
\newblock URL:
  \url{http://www.sciencedirect.com/science/article/pii/S0019995863902900},
  \href {http://dx.doi.org/https://doi.org/10.1016/S0019-9958(63)90290-0}
  {\path{doi:https://doi.org/10.1016/S0019-9958(63)90290-0}}.

\bibitem{Reichardt11}
Ben~W. Reichardt.
\newblock Reflections for quantum query algorithms.
\newblock In {\em Proceedings of the Twenty-second Annual ACM-SIAM Symposium on
  Discrete Algorithms}, SODA '11, pages 560--569, Philadelphia, PA, USA, 2011.
  Society for Industrial and Applied Mathematics.
\newblock URL: \url{http://dl.acm.org/citation.cfm?id=2133036.2133080}.

\bibitem{Reichardt}
Ben~W. Reichardt.
\newblock Reflections for quantum query algorithms.
\newblock In {\em Proceedings of the Twenty-second Annual ACM-SIAM Symposium on
  Discrete Algorithms}, SODA '11, pages 560--569, Philadelphia, PA, USA, 2011.
  Society for Industrial and Applied Mathematics.
\newblock URL: \url{http://dl.acm.org/citation.cfm?id=2133036.2133080}.

\end{thebibliography}

\appendix
\section{\texorpdfstring{$\pm 2$}{±2}-Substring Search Algorithm}\label{sec:substr2}

The simplest case is an algorithm that searches for a substring $x[i,j]$ that has a balance $f(x[i,j])\in\{+2,-2\}$.
The algorithm looks for two sequential equal symbols using Grover's Search Algorithm \cite{g96,bbht98}. Formally, it is a procedure that accepts the following parameters as inputs and outputs:
\begin{itemize}
    \item {\bf Inputs:}
    \begin{itemize}
        \item an integer $l\in\{0,\dots,n-1\}$ which is a left border for the substring to be searched,
        \item an integer $r\in\{l,\dots,n-1\}$ which is a right border for the substring to be searched, 
        \item a set $s\subseteq \{+1,-1\}$ which represents the sign of the balance $f$ that we are looking for.
    \end{itemize}
    \item {\bf Outputs:}
    \begin{itemize}
        \item a triple $(i,j,\sigma)$ where $i$ and $j$ are the left and right border of the found substring and where $\sigma$ is the sign of $f(x[i,j])$, i.e. $\sigma=\sign(f(x[i,j]))$. If there are no such substrings, then the algorithm returns $\NULL$. Furthermore, when there is a satisfying substring, the result is such that $l\leq i\leq j\leq r$.
    \end{itemize}
\end{itemize}
The algorithm searches for a substring $x[i,j]$ such that $f(x[i,j])\in\{+2,-2\}$ and $\sigma=\sign(f(x[i,j]))\in s$.

\bigskip

We use Grover's Search Algorithm as a subroutine $\textsc{Grover}(l,r,{\cal F})$ that takes as inputs $l$ and $r$ as left and right borders of the search space and some function ${\cal F}:\{l,\dots,r\}\to\{0,1\}$. We search for any index $i$, where $l\leq i\leq r$, such that ${\cal F}(i)=1$ .
The result of the function $\textsc{Grover}(l,r,{\cal F})$ is either some index $i$ or $-1$ if it has not found the required $i$.
In Algorithm \ref{alg:substring2_base}, we use Grover's search on the function $\mathcal{F}_s:\{0,\dots,n-1\}\to\{0,1\}$ defined by
\[
    \mathcal{F}_s(i)=1 \quad\Leftrightarrow\quad (x_{i}=x_{i+1} \text{ or } x_i=x_{i-1})\text{ and the following conditions hold:}
\]
\begin{itemize}
    \item if $s=\{+1\}$ then $x_i=0$.
    \item if $s=\{-1\}$ then $x_i=1$.
    \item if $s=\{+1,-1\}$ then $x_i=0$ or $x_i=1$.
    \end{itemize}
\begin{algorithm}[ht]
    \caption{$\FindFrom{2}(l,r,s)$. Quantum Algorithm to search for any $\pm 2$ substring.\label{alg:substring2_base}}
    \begin{algorithmic}
        \State $i \gets  \textsc{Grover}(l,r, {\cal F}_s)$\Comment{Invoke Grover's search}
        \If{$i=-1$}
            \State \Return $\NULL$
        \EndIf
        \If{$i\neq-1$}
            \If{$x_i=x_{i+1}$}
                \State \Return $(i,i+1, f(x[i,i]))$
            \EndIf
            \If{$x_i=x_{i-1}$}
                \State \Return $(i-1,i, f(x[i,i]))$
            \EndIf
        \EndIf
    \end{algorithmic}
\end{algorithm}

\begin{lemma}
    The running time of Algorithm \ref{alg:substring2_base} is $O(\sqrt{n})$. The error probability is $O\left(1\right)$
\end{lemma}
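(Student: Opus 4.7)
The plan is to combine the standard query bound for Grover's search with a short combinatorial fact about $\pm 2$-substrings. Observe first that the predicate $\mathcal{F}_s(i)$ depends only on $x_{i-1}$, $x_i$, and $x_{i+1}$, so it can be evaluated with $O(1)$ queries to the input oracle. Hence every marked-element check inside $\textsc{Grover}$ over the index set $\{l,\dots,r\}\subseteq\{0,\dots,n-1\}$ costs $O(1)$ queries to $x$, and the standard analysis of Grover's search from \cite{g96,bbht98} (the version that does not assume knowledge of the number of marked items) gives running time $O(\sqrt{r-l+1})=O(\sqrt{n})$ and constant success probability whenever at least one marked index exists.

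For correctness I would establish the equivalence: $x[l,r]$ contains a $\pm 2$-substring of sign $\sigma\in s$ if and only if there exists some $i\in\{l,\dots,r\}$ with $\mathcal{F}_s(i)=1$. The easy direction is that a consecutive pair with $x_i=x_{i+1}=0$ is itself a $+2$-substring of length two (and symmetrically for $11$ and sign $-1$). For the forward direction I would argue by an elementary pigeonhole: in any binary word $w$ with no two consecutive $0$s one has $|w|_0\leq |w|_1+1$, because each $0$ (other than perhaps the last) must be followed by a $1$; therefore any substring with balance $\geq +2$ must contain two adjacent $0$s. The case of balance $\leq -2$ is symmetric.

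The main subtlety is to check that the triple returned from Grover's index $i$ is indeed of the requested sign and that the reported interval lies in $\{l,\dots,r\}$. This is handled by the fact that $\mathcal{F}_s$ already filters by the value of $x_i$ (so the reconstructed pair automatically has a balance in $s$), together with running Grover on the restricted domain $\{l,\dots,r\}$; the only edge cases are $i=l$ and $i=r$, which are harmless since the algorithm then simply inspects the valid neighbor. Putting the two parts together yields the lemma: the algorithm uses $O(\sqrt{n})$ queries and has a constant error probability strictly less than $1/2$, as claimed.
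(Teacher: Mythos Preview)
Your proposal is correct and follows the same approach as the paper: the lemma reduces immediately to the standard query bound for Grover's search, since evaluating $\mathcal{F}_s$ costs $O(1)$ queries. The paper's own proof is in fact a single sentence invoking Grover, so your additional correctness argument (the pigeonhole showing that any $\pm 2$-substring forces a consecutive equal pair) and the discussion of edge cases go beyond what the lemma statement requires, but they are sound and do no harm.
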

\begin{proof}
    The main part of the algorithm is the Grover's Search algorithm that has $O(\sqrt{n})$ running time and $O\left(1\right)$ error probability.
\end{proof}

It will be useful to consider a modification of the algorithm that finds not just any $\pm 2$ substring, but the closest to the left border or to the right border. In that case, we use a subroutine $\textsc{Grover\_First\_One}$, with parameters $(l,r,{\cal F},direction)$ that accepts $l$ and $r$ as left and right borders of the search space, and a function ${\cal F}$ and a $direction\in\{left, right\}$.
\begin{itemize}
    \item  If $direction=left$, then we search for the maximal index $i$ such that ${\cal F}(i)=1$ where $l\leq i\leq r$.
    \item  If $direction=right$, then we search for the minimal index $i$ such that ${\cal F}(i)=1$ where $l\leq i\leq r$.
\end{itemize}
The result of  $\textsc{Grover\_First\_One}(l,r,{\cal F}, direction)$
is either $i$ or $-1$ if it has not found the required $i$.
See \cite{k2014,ll2016} on an implementation of such a function.

Algorithm \ref{alg:substring2_first} implements the $\FindFirst{2}$ subroutine. It has the same input and output parameters as $\FindFrom{2}$ and an extra input $direction\in\{left, right\}$. 

\begin{algorithm}[ht]
    \caption{$\FindFirst{2}(l,r,s, direction)$. 
    Searching for the first $\pm 2$-substring.\label{alg:substring2_first}}
    \begin{algorithmic}
        \State $i \gets  \textsc{Grover\_First\_One}(l,r, {\cal F}_s, direction)$\Comment{Invoke Grover's search}
        \If{$i=-1$}
            \State \Return $\NULL$
        \EndIf
        \If{$i\neq-1$}
            \If{$x_i=x_{i+1}$}
                \State \Return $(i,i+1, f(x[i,i]))$
            \EndIf
            \If{$x_i=x_{i-1}$}
                \State \Return $(i-1,i, f(x[i,i]))$
            \EndIf
        \EndIf
    \end{algorithmic}
\end{algorithm}

\begin{lemma}\label{lm:substring2_first}
    If the requested substring exists, the expected running time of Algorithm \ref{alg:substring2_first} is $O(\sqrt{j})$, where $j$ is the furthest border of the searching segment. Otherwise, the running time is $O(\sqrt{r-l})$. The error probability is at most $0.1$.
\end{lemma}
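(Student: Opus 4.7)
The plan is to reduce the analysis to three properties of the subroutine $\textsc{Grover\_First\_One}$ as implemented in \cite{k2014,ll2016}: correctness of the returned index, running time via a doubling schedule, and amplifiable error probability.

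First I would verify correctness. If $\textsc{Grover\_First\_One}(l,r,\mathcal{F}_s,direction)$ returns an index $i\neq -1$, then by definition $\mathcal{F}_s(i)=1$, so either $x_i=x_{i+1}$ or $x_i=x_{i-1}$; in both cases we obtain two consecutive equal symbols forming a substring $x[i',i'+1]$ with $|f(x[i',i'+1])|=2$, and the sign is $+1$ if $x_i=0$ and $-1$ if $x_i=1$. The conditions on $\mathcal{F}_s$ guarantee that this sign lies in $s$. The ``first'' property in the chosen direction is inherited directly from the specification of $\textsc{Grover\_First\_One}$.

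Next, for the running-time bound, I would invoke the standard doubling implementation of first-occurrence quantum search: the subroutine runs Grover on windows of sizes $1,2,4,8,\ldots$ starting from the appropriate end, stopping as soon as a marked element is detected. Suppose the first satisfying index lies at distance $p$ from the starting end; then the algorithm terminates at window size $w\leq 2p$, and the total expected cost is
\[
  \sum_{i=0}^{\lceil \log_2 p\rceil} O\bigl(\sqrt{2^i}\bigr) \;=\; O(\sqrt{p})\;=\;O(\sqrt{j}),
\]
by geometric summation, where $j$ denotes the furthest reached border. When no satisfying index exists, the doubling proceeds up to size $r-l$, so the sum becomes $O(\sqrt{r-l})$.

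Finally, for the error analysis, each invoked Grover search has constant-bounded error, and by amplifying with an exponentially decreasing error budget across the at most $O(\log(r-l))$ levels of the doubling, one can drive the overall failure probability below $0.1$ while incurring only a constant-factor overhead in queries. The main obstacle, and the one where care is needed, is precisely this error bookkeeping: a naive union bound would cost an extra $\log$ factor, so I would rely on the refined allocation given in \cite{k2014,ll2016} to conclude the claimed constant error bound without affecting the $O(\sqrt{j})$ and $O(\sqrt{r-l})$ query costs.
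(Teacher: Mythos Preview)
Your proposal is correct and takes essentially the same approach as the paper: both reduce the lemma to the correctness, running-time, and error guarantees of the $\textsc{Grover\_First\_One}$ subroutine from \cite{k2014,ll2016}. The paper's proof is in fact even terser---it simply cites those references for the $O(\sqrt{j})$, $O(\sqrt{r-l})$, and $0.1$-error claims---whereas you additionally unpack the doubling-window argument and the error allocation, which is perfectly fine but not required here.
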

\begin{proof}
    The main part of the algorithm is $\textsc{Grover\_First\_One}$  \cite{k2014,ll2016} that has $O(\sqrt{j})$ expected running time and at most $0.1$ error probability. The running time is $O(\sqrt{r-l})$ if there are no $\pm2$-substrings.
\end{proof}

\end{document}